\useunder{\uline}{\ul}{}
\def\BibTeX{{\rm B\kern-.05em{\sc i\kern-.025em b}\kern-.08em
    T\kern-.1667em\lower.7ex\hbox{E}\kern-.125emX}}
\theoremstyle{plain}
\newtheorem{lemma}{Lemma}
\theoremstyle{definition}
\newtheorem{assumption}{Assumption}
\newtheorem{definition}{Definition}
\theoremstyle{remark}
\newcommand{\fig}[1]{\mbox{Fig.~\ref{#1}}}
\newcommand{\sect}[1]{Section~\ref{#1}}
\newcommand{\defn}[1]{Definition~\ref{#1}}
\def\ap{\ensuremath{{AP}}}
\def\game{\ensuremath{\mathcal{G}}}	
\def\robot{\ensuremath{R}}	
\def\adv{\ensuremath{E}}	
\def\act{\ensuremath{Act}}	
\def\robotact{\ensuremath{Act_\robot}}	
\def\advact{\ensuremath{Act_\adv}}	
\def\tsys{\ensuremath{\mathcal{TS}}}	
\def\aut{\ensuremath{\mathcal{A}}}	
\def\final{\ensuremath{F}}
\def\att{\ensuremath{\mathsf{Attr}}}
\newcommand{\win}{\ensuremath{\mathsf{Win}}}
\def\labelfcn{\ensuremath{L}}
\def\distr{\ensuremath{{\mathcal{D}}}}
\def\indicator{\ensuremath{\mathbf{1}}}
\newcommand{\calF}{\mathcal{F}}
\DeclareAcronym{ltl}{
	short = LTL, long = Linear Temporal Logic ,
	class = abbrev
}
\DeclareAcronym{scltl}{
	short = sc-LTL, long = Syntactically Co-safe LTL ,
	class = abbrev
}
\DeclareAcronym{tsys}{
	short = TS, long = Transition System ,
	class = abbrev
}
\DeclareAcronym{dfa}{
	short = DFA, long = Deterministic Finite Automaton ,
	class = abbrev
}
\DeclareAcronym{mdp}{
	short = MDP, long = Markov Decision Process ,
	class = abbrev
}
\newcommand{\stopgame}{\mathsf{stop}}
\newcommand{\sink}{\mathsf{sink}}
\title{\LARGE \bf
Opportunistic Synthesis in Reactive Games under Information Asymmetry
}
\author{Abhishek N. Kulkarni and Jie Fu
\thanks{This material is based upon work supported by the Defense Advanced Research Projects Agency (DARPA) under Agreement No. HR00111990015. }
\thanks{Abhishek N. Kulkarni is  with Robotics Engineering Program, Worcester Polytechnic Institute, Worcester, MA 01604, USA
        {\tt\small  ankulkarni@wpi.edu}}%
\thanks{Jie Fu is with the faculty of the Department of Electrical and Computer Engineering, with affiliation to Robotics Engineering Program, Worcester Polytechnic Institute, Worcester, MA 01604, USA
        {\tt\small jfu2@wpi.edu}}%
}
\begin{document}

\maketitle
\thispagestyle{empty}
\pagestyle{empty}

\begin{abstract}

Reactive synthesis is a class of methods to construct a provably-correct control system, referred to as a robot, with respect to a temporal logic specification in the presence of a dynamic and uncontrollable environment. This is achieved by modeling the interaction between the robot and its environment as a two-player zero-sum game. However, existing reactive synthesis methods assume both players to have complete information, which is not the case in many strategic interactions. In this paper, we use a variant of hypergames to model the interaction between the robot and its environment; which has incomplete information about the specification of the robot. This model allows us to identify a subset of game states from where the robot can leverage the asymmetrical information to achieve a better outcome, which is not possible if both players have symmetrical and complete information. We then introduce a novel method of  \textit{opportunistic synthesis} by defining a \ac{mdp} using the hypergame under temporal logic specifications. When  the environment plays some stochastic strategy in its perceived sure-winning and sure-losing regions of the game,  we show that by following the opportunistic strategy, the robot is ensured to only improve the outcome of the game---measured by satisfaction of sub-specifications---whenever an opportunity becomes available. We demonstrate the correctness and optimality of this method using a robot motion planning example in the presence of an adversary.
 
 

\end{abstract}

\section{Introduction}

Reactive synthesis (RS) is used to synthesize a strategy (controller) that is provably-correct with respect to a given \ac{ltl} specification. Pneuli and Rosner \cite{Pnueli1989} showed that such an interaction between a controlled agent, called the robot, and its dynamic and uncontrollable environment can be represented as a two-player turn-based zero sum game. Consequently, finding a correct strategy satisfying the specification is equivalent to finding a winning strategy for the robot in the corresponding zero-sum game. In recent years, RS has found applications in several areas such as autonomous vehicles \cite{Hadas2011,Wongpiromsarn2013}, aircraft mission planning \cite{Humphrey2014}, defense \cite{Wu2017} etc.

However, the strategies computed using RS are known to be conservative \cite{bloem2014handle,hagihara2016simple}. This conservativeness may be attributed to the \emph{zero-sum} assumption used to model the interaction. Implicitly, this assumption implies that the environment knows the exact specification of the robot and plays a perfect counter-strategy. However, in many of the applications of RS such as autonomous vehicles, the environment, consisting of other vehicles, may not be adversarial. On the other hand, in defense applications where the environment is adversarial, the enemy may not have complete information regarding the task of the robot. 


\begin{figure}
    \centering
    \includegraphics[scale=0.35]{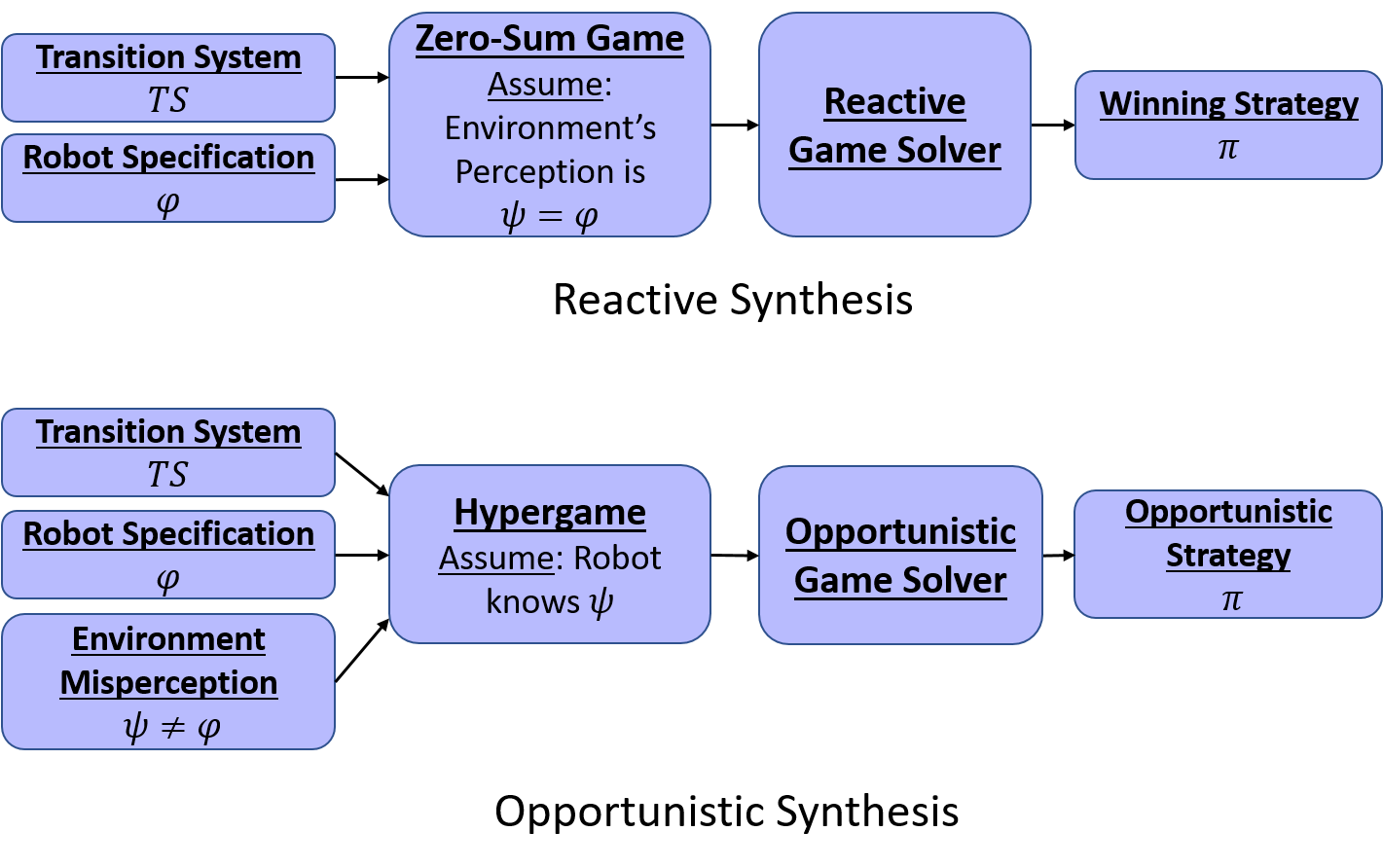}
    \caption{Comparison between Reactive Synthesis and (Proposed) Opportunistic Synthesis. The task of the robot is the \ac{ltl} specification $\varphi$. The environment misperceives the task of robot as the \ac{ltl} specification $\psi \neq \varphi$. The (proposed) hypergame formulation assumes that robot is aware of the information asymmetry.} 
    \label{fig:my_label}
\end{figure}

In this paper, we propose  a method called \emph{opportunistic synthesis} to address the conservativeness of RS and leverage  the information asymmetry between the robot and its adversarial environment, i.e. when one of the two players in the game has more or better information than the other \cite{Kim2004}. Assuming that the environment \emph{does not know} the complete task specification of the robot, we are interested in addressing the following question -- \textit{If the robot is aware of the information asymmetry, then how can it capitalize on the adversary's imperfect counter-strategy to enhance its winning strategy?} The key contributions of this paper are 
\begin{itemize}
\item
  \textbf{Hypergame Model:} We model the interaction between the robot and its environment as a second-level hypergame \cite{Bennett1977}, in contrast to a zero-sum game.  This model represents the ability of the robot to reason about how the environment will behave given that it has incomplete information. By leveraging the knowledge about environment's behavior, we show that the robot can synthesize an opportunistic strategy that dominates\footnote{A strategy is dominant over another if, regardless of what any other players do, the strategy earns a player a larger payoff than the other strategy.} the one computed using RS given complete and symmetrical information. 
\item
  \textbf{Characterization:} The solution of RS partitions the game state-space into winning and losing regions for the robot \cite{Manna1990}. However, we show that under the assumptions of this paper about information asymmetry, the state-space is partitioned into \textit{five} regions.  By assuming that the environment plays  stochastic strategy in regions where it perceives itself to be winning or losing, we show how to construct an \ac{mdp} to represent the hypergame for synthesizing  opportunistic strategy for the robot. 
\end{itemize}

\subsection{Literature}

To the best of our knowledge, this is the first paper investigating the hypergame model to synthesize provably-correct strategies given temporal logic specifications. Hypergames \cite{Bennett1977} are used to model the interactions where one or more players are playing different games because of their misperception of other players' capabilities and/or objectives. Hypergame theory allows the agent to improve its strategy by reasoning about multiple games being played by different players \cite{Bennett1982}. In the literature, this theory has been applied to model strategic interactions such as military conflicts \cite{Bennett1979,Lyn2006}, information security and cyber-physical systems security \cite{Wu2017}. Similar to the hypergame formulation in this paper, Imamverdiyev \cite{Imamverdiyev2013} models the interaction between an attacker and a defender in an information security game using a second-level hypergame. The model is motivated by the information asymmetry that often exists in such a game. He proposes an algorithm to compute equilibrium in a second-level normal form hypergame model. However, the result do not generalize to games with payoffs in terms of evaluation of temporal logic formulas.
Kovach \cite{KovachPhDThesis} is the first to introduce a framework that integrates the temporal logic and hypergame theory. He formalizes the concepts of trust, mistrust and deception in his thesis. However, his work focuses on defining the mathematical framework and verification rather than the strategy synthesis, which is the focus of this paper. 

The games with information asymmetry are a subset of games with incomplete information, with the assumption that at least one player has the correct information \cite{Scerala2017}. The synthesis for incomplete information has been rigorously defined and proved to be EXPTIME-complete for \ac{ltl} by Kupferman and Vardi \cite{Kupferman2000b}. They show that incomplete information does not affect the complexity of synthesis problem. Niu et. al \cite{Niu2018} study the problem of security of cyber-physical systems. They note that zero-sum games are a good tool for the worst-case analysis, but the games with information asymmetry better represent the strategic interactions between the attacker and defender. They approach the minimum violation synthesis problem under \ac{ltl} specifications by modeling the interaction as a concurrent Stakelberg game, which captures the information asymmetry. We take a different approach than Niu et. al by accounting for the robot's ability to reason about the information asymmetry, while treating the interaction as a turn-based game. 

In AI literature, \textit{opportunistic planning} is used in a different context. Cashmore et. al \cite{Cashmore2018} treat opportunities as optional goals in the game, but with high payoffs. Their approach starts by planning for the \textit{must-satisfy} goals of the game and adjusts the plan when an opportunity to satisfy an optional goal becomes available. However, they do not consider any adversarial interaction between the robot and its environment or the ability of robot to reason about environment's perception of it's goals. On the contrary, the \textit{opportunistic synthesis}, as proposed in this paper, assumes the environment to be adversarial, but with incomplete information about its objectives. It identifies the states from where an opportunity to get higher payoff will be available and maximizes the likelihood to reach one of these states.

\section{Reactive Synthesis}

\underline{Notations}: Let $\Sigma$ be a finite alphabet. A sequence of symbols $w=w_0 w_1 \ldots w_n  $ with $w_i\in \Sigma, i=0,1,\ldots, n$ is called a \emph{finite word} and $\Sigma^\ast$ is the set of finite words that can be generated with alphabet $\Sigma$. We denote $\Sigma^\omega$ the set of $\omega$-regular words obtained by concatenating the elements in $\Sigma$ infinitely many times.  Given a set $X$, let $\distr(X)$ be the set of probability distributions over $X$. The indicator function is defined to be $\indicator_X(y)=1$ if $y\in X$ and $0$ otherwise. 


Let $\robot, \adv$ denote the robot (a controlled agent) and its adversary (an uncontrolled environment agent), respectively. Let the tasks of the robot be specified using a subclass of \ac{ltl} formulas, called syntactically co-safe \ac{ltl} formulas \cite{kupferman2001model}. The syntax of \ac{ltl} formulas are given as follows.
{\definition[\ac{ltl}] Let $\ap$ be a set of atomic propositions, the Linear Temporal Logic (LTL) has the following syntax 
\[ \varphi := \top \mid \bot \mid p \mid \varphi \mid \neg\varphi \mid \varphi_1 \land \varphi_2 \mid \bigcirc \varphi \mid \varphi_1 {\cal U} \varphi_2 \mid \lozenge\varphi \] where $\top,\bot$ are universally true and false, respectively, $p \in \ap$ is an atomic proposition, $\bigcirc, \cal U$ and $\lozenge $ denote the temporal modal operators for \textit{next}, \textit{until} and \textit{eventually}.} 

A co-safe \ac{ltl} formula contains only the temporal operator $\bigcirc$, $\cal U$, and $\lozenge$ and can be written in positive normal form. A co-safe \ac{ltl} formula  can equivalently represented by a \ac{dfa} defined as
{\definition[\ac{dfa}] A Deterministic Finite Automaton is defined as a 5-tuple,
\[ \aut = \left< Q, \Sigma, \delta, q_0, \final \right>, \]
where $Q$ is the set of states, $\ap$ is the set of atomic propositions, $\Sigma= 2^{\ap}$ is the set of input symbols, $\delta: Q\times \Sigma \rightarrow Q$ is a deterministic transition function, $q_0$ is the initial state and $\final \subseteq Q$ is the set of accepting states.} 

A word $w\in \Sigma^\ast$ is accepted if and only if $\delta(q_0,w)\in \final$. 
An infinite word $w\in \Sigma^\omega$ satisfying an \ac{ltl} formula $\varphi$ contains a good prefix $w_0w_1\ldots w_n$ that is accepted in the \ac{dfa} corresponding to $\varphi$. Given a co-safe \ac{ltl} formula, the \ac{dfa} accepting finite good prefixes for $\varphi$ can be obtained using tools such as spot \cite{spot2}.

We will assume that all \ac{dfa}s referred to in this paper are complete, i.e. for every state $q \in Q$ and for every input symbol $a \in \Sigma$ the transition function $\delta(q, a)$ is defined. An incomplete \ac{dfa} can be made complete by introducing a sink state and directing all undefined transitions into the sink state. 

The interaction between the robot and its adversary is captured in a two-player turn-based transition system, 

{\definition[Transition System (TS)] The \ac{tsys} is a 6-tuple \[\tsys = \left< S, \act, T, s_0, \ap, \labelfcn \right>\]  where $S = S_\robot \cup S_\adv$ is the set of states partitioned on the basis of the turn of $\robot$ and $\adv$, $\act = \robotact \cup \advact$ is the set of actions for $\robot$ and $\adv$ respectively. The function $T: S \times \act \rightarrow S$ represents the deterministic transition function. The $\ap$ is a set of atomic propositions and $\labelfcn: S \rightarrow 2^{\ap}$ is the labeling function.}

{\definition[Reactive Game] A reactive game between the robot and its adversary defined by a transition system $\tsys$ and a specification automaton $\aut$ representing the language of $\varphi$ is the product transition system given by,
\[ \game(\varphi) = \left< G, \act, \Delta, g_0, F_\varphi 
\right> \]
where $G = S \times Q$, $g_0 = (s_0, \delta(q_0, L(s_0))$ and $\Delta: G \times \act \rightarrow G$ is the transition function such that given the states $g=(s,q)$ and $g'=(s',q')$,  $\Delta(g, a) =  g'$ if and only if $T(s, a) = s'$ and $\delta(q, \labelfcn(s')) = q'$. The set $F_\varphi = S \times \final$ is a set of accepting states.} 

A run in the game $\game(\varphi)$ is an infinite sequence of states $\rho = g_0g_1\ldots$. Given a run $\rho$, the set of states that occur in the run $\rho$ is denoted by $\mathsf{Occ}(\rho) = \{g \in G \mid  \exists i\ge 0, g_i = g\}$. A run is said to be winning for $\robot$ if it satisfies $\mathsf{Occ}(\rho) \cap F_\varphi \ne \emptyset$. If a run is not wining for $\robot$, it is winning for $\adv$.
A state $g \in G$ is said to be winning if the robot can enforce a win from $g$. Otherwise, $g$ is said to be losing. The exhaustive set of winning states for the robot is called winning region of the robot and is denoted by $\win_\robot$. The winning regions for robot and its adversary are mutually exclusive. The winning region for the robot is computed using the Zielonka attractor algorithm \cite{Zielonka1998} as follows: Given a set of final states $\final_\varphi$, 
\begin{enumerate}
    \item Let $\att_0 = \final_\varphi$.
    \item $\att_{k+1} = \att_{k}\cup \{g \in (S_\robot \times Q) \mid \exists a \in \act.\; \Delta(g, a) \in \att_k) \} \cup \{g \in (S_\adv \times Q) \mid \forall a \in \act. \; \Delta(g, a) \in \att_k \}$
    \item Repeat (2) until $\att_{k+1} = \att_k$. Let $\att_{k}=\att^\ast$. 
    \item Let $\att(\final) = \att^\ast$.
\end{enumerate}

We denote the set $\att(\final)$ as the attractor set. The rank of a state $g$ is the smallest level $k$ at which $g \in \att_k$, denoted as $\mathsf{rank}(g) = k$.  The winning region for the robot in the game $\game(\varphi)$ is the set of states in attractor $\att(\final_\varphi)$.


A specification $\varphi$ is said to be realizable for robot over the transition system $\tsys$ if and only if the winning region for the robot, $\win_\robot$, contains the initial state $g_0 \in G$. Otherwise, the specification is said to be unrealizable. 



Given a game $\game(\varphi)$, a stochastic, memoryless strategy for the robot is a function $\pi: \win_\robot \cap (S_\robot\times Q) \rightarrow \distr(\act_\robot)$. A strategy is said to be \textit{almost-sure-winning} if every run $\rho$, produced as a result of robot using strategy $\pi$ and adversary using any feasible strategy $\sigma$, is a winning run with probability one. Given  a state $g \in  \win_\robot \cap (S_\robot\times Q)$, the \textit{almost-sure-winning} strategy $\pi$ for the robot can be given as follows: For each $g\in \win_\robot\cap (S_\robot\times Q)$, let $\mbox{safe}_\robot(g)= \{ a \in \act_\robot \mid \Delta(g, a) \in \win_\robot \}$. Let $\mbox{progress}_\robot(g) = \{a \mid  \mathsf{rank}(\Delta(g, a)) < \mathsf{rank}(g) \}$. Let $\pi(g,a)>0$ for at least one action in $\mbox{progress}_\robot(g)$ and the support of $\pi(g)$ be a non-empty subset of $\mbox{safe}_\robot(g)$. The proof can be found in \cite{fu2016synthesis} 
The  \textit{almost-sure-winning} strategy $\sigma$ of the adversary is a memoryless maximally permissive strategy \cite{bernet2002permissive} and can be defined as follows. Let $\mbox{safe}_\adv(g) = \{ a \in \act_\adv \mid \Delta(g, a) \in \win_\adv  \}$ and the support of $\sigma(g)$ is a  non-empty subset of $\mbox{safe}_\adv(g)$. By definition, an almost-sure winning stochastic strategy for either $\robot$ or $\adv$ is not unique.

Under the assumption of complete observation and complete information, it follows that there exists no strategy for either the robot or its adversary to reach a winning state from a losing state. 
However, when the adversary has incomplete knowledge about the task specification of the robot, we  have a reactive game with asymmetrical information. In such games, we show that the robot can synthesize \textit{opportunistic} strategies that exploit the information asymmetry to enforce a win in an otherwise unrealizable game; had there been no information asymmetry.


\section{Reactive Game under Information Asymmetry}
\subsection{Hypergame}

Consider an interaction between the robot and its adversary where the robot has the \ac{ltl} specification $\varphi$ while its adversary believes that the robot is trying to satisfy a different specification $\psi \ne \varphi$. 

\begin{assumption}
Let $\varphi_1, \varphi_2$ be two \ac{ltl} formulas such that 
\begin{align*}
\varphi &= \varphi_1 \land \varphi_2, \\
\psi &= \varphi_1.
\end{align*}
\end{assumption}

The above assumption means that the adversary knows partial task specification of the robot. The interaction between two players, where at least one of the player has incorrect perception of the true specification of the opponent, can be represented as a  hypergame \cite{Bennett1977,gharesifard2012evolution}. 





{ \definition[Hypergame] A first-level hypergame between two players is represented as a 2-tuple \[ \mathcal{H}^1 = \left< \game_\robot, \game_\adv \right> = \left<\tsys, \{\varphi, \psi\} \right> \] where $\game_\robot = \game(\varphi)$ is the reactive game from the robot's perspective while $\game_\adv = \game(\psi)$ is the game from the adversary's perspective. The tuple $\left<\tsys, \{\varphi, \psi\} \right>$ is an equivalent representation that highlights that both the games $\game_\robot$ and $\game_\adv$ are defined over same transition system $\tsys$ but each player has a different perception of robot's specification.}

When the robot is aware of the existing misperception, i.e. the adversary's belief about the robot's specification $\psi$, we have a second-level hypergame.  

{\definition[Second-Level Hypergame] The second-level hypergame between two players; the robot and its adversary, where only the robot is aware of the misperceived game is represented as \[{\cal H}^2 =  \left< {\cal H}^1, \game_\adv \right>  \] where the robot computes the strategy by solving the hypergame ${\cal H}^1$ and the adversary computes strategy by solving the reactive game $\game_\adv = \game(\psi)$. }

Given the second-level hypergame as defined above, we are interested in the following question

{\question[] 
Assume that the specification $\varphi$ is unrealizable in the reactive game $\game(\varphi)$ with complete information. Then is it possible that when information asymmetry exists, as captured by the hypergame ${\cal H}^2$, the robot can satisfy the specification $\varphi = \varphi_1 \land \varphi_2$ with a high likelihood? If not, then under what conditions can the robot satisfy at least a part of specification, $\varphi_1$ (common knowledge) or $\varphi_2$ (only known to robot)? 

}



To answer the above question, we first define a transition system that captures the information asymmetry compactly and  facilitates game-theoretic analysis and strategic planning.

{\definition[Hypergame Transition System] \label{defn:tsys} Let $\aut_1 = \left< Q_1, \Sigma, \delta_1, q_{10}, F_1 \right>$ and $\aut_2 = \left< Q_2, \Sigma, \delta_2, q_{20}, F_2 \right>$ be the specification automata for \ac{ltl} formulas $\varphi_1, \varphi_2$. Then, the hypergame transition system in its explicit form is a  5-tuple,
\begin{align*}
    {\cal H} = \left< H, \act, \Delta, h_0, {\cal F} \right>
\end{align*}
where $H = S \times Q_1 \times Q_2$ is the set of states and $h_0 \in H$ is the initial state. Given a state $h = (s, q_1, q_2)$ and action $a \in \act$, the transition function $\Delta: H \times \act \rightarrow H$ is given by $\Delta(h, a) = h' = (s', q_1', q_2')$ where $s' = T(s, a), ~q_1' = \delta_1(q_1, \labelfcn(s'))$ and $q_2' = \delta_2(q_2, \labelfcn(s'))$. The set ${\cal F} = (S \times F_1 \times Q_2) \cup (S \times Q_1 \times F_2)$ is the set of final states. 

}

The choice of $\cal F$ as the accepting state set is motivated by the fact that it contains the final state sets ${\cal F}_1 =S\times \final_1\times Q_2, {\cal F}_2 = S\times Q_1\times \final_2 \text{ and } {\cal F}_{12} = S\times \final_1\times \final_2$ of the games $\game(\varphi_1), ~\game(\varphi_2) \text{ and } \game(\varphi_1 \land \varphi_2)$. This facilitates the computation and analysis of winning regions of the game $\game(\varphi)$ and the sub-games $\game(\varphi_1), \game(\varphi_2)$ over the same transition system.  

The outcome of hypergame in this transition system is the run $\rho =  h_0 h_1 h_2 \ldots$. By construction, the run is winning for robot over specification $\varphi$ (resp. $\varphi_1$, $\varphi_2$) if and only if $\mathsf{Occ}(\rho) \cap {\cal F}_{12} \ne \emptyset$ (resp. $\mathsf{Occ}(\rho) \cap {\cal F}_1 \ne \emptyset$, $\mathsf{Occ}(\rho) \cap {\cal F}_2 \ne \emptyset$).  

\subsection{The Partition of States} \label{sec:partition}
Given the hypergame transition system $\cal H$, we are interested in identifying the states from where the robot has a strategy to satisfy $\varphi_1, \varphi_2$ and/or $\varphi$. Therefore, we compute the three winning regions for robot in the reactive games over specifications $\varphi_1, \varphi_2$ and $\varphi$. 
\begin{enumerate}
    \item $\win_\robot(\varphi_1) = \att(\mathcal{F}_1)$ is the set of winning states in the game $\game(\varphi_1)$. 
    \item $\win_\robot(\varphi_2) = \att(  \mathcal{F}_2)$ is the set of winning states in the game $\game(\varphi_2)$. 
    \item $\win_\robot(\varphi) = \att(\mathcal{F}_{12})$ is the set of winning states in the game $\game(\varphi)$. 
\end{enumerate}

We have the following relations between winning regions in three games.

\begin{lemma}
\label{lm1}
Given $\varphi=\varphi_1\land \varphi_2$, it holds that 
 $\win_\robot(\varphi) \subseteq \win_\robot(\varphi_i)$, for $i=1,2$.
\end{lemma}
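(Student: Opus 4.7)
The plan is to prove the inclusion by showing monotonicity of the Zielonka attractor construction with respect to its target set, and then observing a straightforward set inclusion among the three accepting sets $\mathcal{F}_1$, $\mathcal{F}_2$, and $\mathcal{F}_{12}$.

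First I would note the set-theoretic fact that $\mathcal{F}_{12} = S \times F_1 \times F_2 \subseteq S \times F_1 \times Q_2 = \mathcal{F}_1$, and symmetrically $\mathcal{F}_{12} \subseteq \mathcal{F}_2$, which holds simply because $F_i \subseteq Q_i$. This is the structural content that makes the lemma true; the rest is wrapping.

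Next I would establish monotonicity of the attractor operator: if $A \subseteq B \subseteq H$, then $\att(A) \subseteq \att(B)$. This can be shown by induction on the attractor level $k$. In the base case, $\att_0(A) = A \subseteq B = \att_0(B)$. For the inductive step, assuming $\att_k(A) \subseteq \att_k(B)$, any robot-state that has some action leading into $\att_k(A)$ trivially has some action leading into $\att_k(B)$, and any adversary-state whose actions all lead into $\att_k(A)$ has all actions leading into $\att_k(B)$; thus $\att_{k+1}(A) \subseteq \att_{k+1}(B)$. Taking the fixed point gives $\att(A) \subseteq \att(B)$.

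Combining the two, I obtain $\win_\robot(\varphi) = \att(\mathcal{F}_{12}) \subseteq \att(\mathcal{F}_i) = \win_\robot(\varphi_i)$ for $i = 1, 2$, which is exactly the claim. I do not expect any serious obstacle here: the lemma is essentially the observation that a strategy witnessing the conjunction also witnesses each conjunct individually, and the attractor-monotonicity argument is the compact formal counterpart of that intuition. As an alternative, one could argue directly at the level of strategies, noting that any robot strategy $\pi$ that enforces $\mathsf{Occ}(\rho) \cap \mathcal{F}_{12} \neq \emptyset$ against every adversary strategy automatically enforces $\mathsf{Occ}(\rho) \cap \mathcal{F}_i \neq \emptyset$, since $\mathcal{F}_{12} \subseteq \mathcal{F}_i$; I would mention this as a sanity check but present the attractor-monotonicity proof as the main argument, as it fits directly with the Zielonka construction recalled earlier in the paper.
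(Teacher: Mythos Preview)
Your proof is correct, but your main argument differs from the paper's. The paper argues directly at the level of strategies: take any $h\in\win_\robot(\varphi)$, let $\pi_\varphi$ be the corresponding winning strategy, and observe that every run it enforces visits $\calF_{12}\subseteq\calF_i$, so the same $\pi_\varphi$ witnesses $h\in\win_\robot(\varphi_i)$. This is exactly the argument you relegate to a ``sanity check'' at the end of your proposal. Your primary route instead establishes monotonicity of the Zielonka attractor in its target set by induction on the level $k$, and then applies it to $\calF_{12}\subseteq\calF_i$. Both are valid; your attractor-monotonicity argument is slightly more machinery but is purely structural and avoids invoking the existence of winning strategies, while the paper's strategy-level argument is shorter and more semantic. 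Since the paper has already recalled the attractor construction explicitly, either presentation would fit; you might simply swap the roles and lead with the strategy argument, mentioning monotonicity as the alternative.
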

\begin{proof}
Let $\pi_\varphi$ be the winning strategy for the robot with respect to task $\varphi$. For any state $h\in \win_\robot(\varphi)$, the robot, by exercising $\pi_\varphi$, can enforce a run to visit $\calF_{12}$. Because $\calF_{12}\subseteq \calF_i$, for $i=1,2$, then this run satisfies $\varphi_i$, for $i=1,2$. By definition of winning region, it holds that $h\in \win_\robot(\varphi_i)$, $i=1,2$, witnessed by strategy $\pi_\varphi$. 
\end{proof}

On the contrary, $\win_\adv(\varphi_i) \supseteq \win_\adv(\varphi)$. Thus, if the adversary can ensure to win game  $\game(\varphi_1)$, then even though it does not know $\varphi_2$, it can prevent the robot from satisfying any specification $\varphi_1\land \phi$, where $\phi$ is an arbitrary \ac{ltl} formula.

Next, we define a win-labeling function ${\cal{W}}: H \rightarrow \{W, L\}^3$ that labels each state $h \in H$ with an ordered 3-tuple denoting whether the state $h$ is winning (W) or losing (L) for the robot in the games  $\game(\varphi_1)$, $\game(\varphi_2)$, and $\game(\varphi)$. For example, if a state $h$ is winning for the robot in the game $\game(\varphi_1)$ and the game $\game(\varphi_2)$, but losing in the game $\game(\varphi)$, then its win-label is ${\cal W}(h) = \{W, W, L\}$  \footnote{If $\varphi = \varphi_1 \land \varphi_2$ then for a state $h \in H$ to be winning in the game over $\varphi$, it must be winning over in both the sub-games over $\varphi_1$ and $\varphi_2$ \cite[Lma 1]{kulkarni2018compositional}.}. 

Note that the win-labeling function can assign to every state $h \in H$, a unique label from $2^3 = 8$ possible labels. We analyze each possible label separately to understand which of the objectives $\varphi_1, \varphi_2$ or $\varphi$ should the robot try to satisfy. 

\paragraph{Case I: ${\cal W}(h) = (L, L, L)$} The state $h$ is losing for robot in the games $\game(\varphi_1), \game(\varphi_2)$ and $\game(\varphi)$. That is, the adversary has a winning strategy $\sigma$ that will ensure that the robot can never satisfy $\varphi_1$. Therefore, in this case, the robot can try to satisfy only $\varphi_2$, but will never be able to satisfy $\varphi$. 

\paragraph{Case II: ${\cal W}(h) = (L, W, L)$} The state is losing for robot in the games $\game(\varphi_1)$ and $\game(\varphi)$, but winning in game over $\varphi_2$. That is, the adversary has a winning strategy $\sigma_W$ that will ensure that robot can never satisfy $\varphi_1$. Therefore, in this case, the robot must satisfy only $\varphi_2$, and not $\varphi$.

\paragraph{Case III: ${\cal W}(h) = (W, L, L)$} The state is losing for robot in the games $\game(\varphi_2)$ and $\game(\varphi)$, but winning in game over $\varphi_1$. That is, the adversary believes that it has lost the game and can be assumed to play a random strategy, $\sigma_L$. In this case, the robot may try to satisfy $\varphi$, while staying within the winning region of game $\game(\varphi_1)$. 

\paragraph{Case IV: ${\cal W}(h) = (W, W, L)$} The state is winning for robot in the games $\game(\varphi_1)$ and $\game(\varphi_2)$, but losing in game $\game(\varphi)$. That is, the adversary believes that it has lost the game. This state presents an interesting decision problem where robot must decide whether to \textit{try} satisfying $\varphi$ or satisfy just one of the specifications, $\varphi_1$ or $\varphi_2$. 

\paragraph{Case V: ${\cal W}(h) = (W, W, W)$} This is a trivial case, which is the conventional reactive game. The robot can exercise the winning strategy for $\game(\varphi)$ regardless of the strategy and perception of the adversary.

\paragraph{Cases VI-VIII: ${\cal W}(h) = (L, L, W)$, $(L, W, W)$, or $(W, L, W)$} These cases are not possible, because the robot must be winning in $\varphi_1$ and $\varphi_2$ to be winning in $\varphi$ (See Lemma~\ref{lm1} and \cite{kulkarni2018compositional}).

\subsection{Synthesizing opportunistic and reactive strategies}

Recall that the winning regions $\win_\robot(\cdot)$ we computed in previous subsection ensures a win in the respective reactive games. The winning strategies based on these winning regions do not exploit the information asymmetry. Hence, to identify the opportunities generated due to the information asymmetry,  we  make certain assumption about the strategy of the adversary. The assumptions, when the adversary \emph{believes} that the current state is losing for itself, i.e. the win-label of the state is of the form ${\cal W}(h) = (W, \cdot, \cdot)$ where $\cdot$ means it can either be $L$ or $W$, are as follows. 

\begin{assumption}
For a state $h \in H$ with the win-label $ {\cal W}(h) = (W, \cdot, \cdot)$ the adversary plays a stochastic strategy $\sigma_L: H \rightarrow \distr(\act_\adv)$.  
\end{assumption}


\begin{assumption}
For a state $h \in H$ with the win-label ${\cal W}(h) = (L, \cdot, \cdot)$, the adversary plays an almost-sure winning strategy $\sigma_W: H \rightarrow \distr(\act_\adv)$ in the game $\game(\varphi_1)$. 
\end{assumption}

{
\remark In this work, we assume that the adversary's losing and winning strategies, $\sigma_L, \sigma_W$, are known to the robot. This assumption may be relaxed if the robot can learn the strategy using model-based reinforcement learning \cite{brafman2002r} or strategy inference  \cite{Paschalidis2018}. This extension will be considered in our future work. }

To develop opportunistic strategy for the robot, we assume that the robot receives a payoff $r_1 \in \mathbb{R}_{> 0}$ if it satisfies $\varphi_1$ and payoff $r_2 \in \mathbb{R}_{> 0}$ if it satisfies $\varphi_2$. Its payoff for satisfying $\varphi$ is $r \in \mathbb{R}_{> 0}$ with the constraint $r \geq r_1 + r_2$ for the decision problem to make sense. The adversary receives the payoff $-r_1$ if the robot satisfies $\varphi_1$ and payoff $r_1$ otherwise. Note that when $r_1 = r_2$, then the robot is considered to be indifferent to satisfying either $\varphi_1$ or $\varphi_2$. Otherwise, if $r_1 > r_2$, then $\varphi_1$ is strictly preferred over $ \varphi_2$, and vice versa.


\begin{definition}[Hypergame for Opportunistic Synthesis] \label{defn:mdp}
Given the hypergame transition system ${\cal H} =  \left< H, \act, \Delta, h_0, {\cal F} \right>$ and the strategy $\sigma = (\sigma_W, \sigma_L$) used by the adversary given its perceived winning and losing states, the opportunistic planning reduces to an \ac{mdp}, defined by 
\[
{\cal H}^{\sigma} = \langle H_\robot, \act_\robot \cup \{\stopgame\},  P, h_0, R\rangle,
\] 
where $ H_\robot=S_\robot \times Q_1\times Q_2$ is a set of states where the robot makes a move, and the probabilistic transition function $P$ and the payoff function $R$ are defined based on the win-label of a state $h \in H_\robot$ as follows,
\begin{itemize}
\item ${\cal{W}}(h) \in (L, L, L)$:
    \begin{itemize}
        \item Enabled Actions: All feasible actions are enabled. The special action $\stopgame$ is not enabled.
        
        \item Transition probability function is given by
            \[
                P(h'\mid h, a_\robot)= \sum_{a_\adv \in \act_\adv} \indicator_{\{h'\}}(\Delta( h, (a_\robot, a_\adv)))\sigma_W(h , a_\adv)
            \]
    \end{itemize}
    
\item  ${\cal{W}}(h) \in (W, L, L) $:
    \begin{itemize}
        \item Enabled Actions: Only actions that have \textit{zero} probability of reaching a state with win-label $(L, L, L)$ are enabled. In other words, the robot is forced to stay within the winning region for $\win_\robot(\calF_1)$ and at least satisfy $\varphi_1$. The special action $\stopgame$ is enabled.
        
        \item Transition probability function is given by
            \[
                P(h '\mid h , a_\robot)= \sum_{a_\adv\in\act_\adv} \indicator_{\{h '\}}(\Delta(h , (a_\robot,a_\adv)))\sigma_L(h , a_\adv)
            \]
            For action $\stopgame$, the game transitions to a sink state $\sink_1$ with probability one, $P(\sink_1\mid h, \stopgame)=1$.
            
            \item Payoff function:  The payoff for reaching the sink state $\sink_1$ is defined as $R(\sink_1) = r_1$.
    \end{itemize}
    
\item ${\cal{W}}(h) \in (L, W, L)$: 
    \begin{itemize}
        \item All states are labeled as absorbing, i.e. $P(h' \mid h, a_\robot) = 0$. This is because the adversary will play its winning strategy $\sigma_W$ in the game $\game(\varphi_1)$ and the robot will never satisfy $\varphi_1$. 
        \item Payoff function: The payoff for reaching the state $h$ is defined as  $R(h) = r_2$. 
    \end{itemize}


\item ${\cal{W}}(h) \in (W, W, W)$: 
    \begin{itemize}
        \item In this partition, the robot must switch to its winning strategy in the game $\game(\varphi)$. 
        \item Payoff function: The payoff for reaching the state $h$ is defined as  $R(h) = r$. 
    \end{itemize}

\item $\mathcal{W}(h ) \in (W,W,L)$: 
    \begin{itemize}
        \item Enabled actions: Any action that does not lead into partition $(L, L, L)$ is enabled. The special action $\stopgame$ is also enabled. 
        \item For the action $\stopgame$, the robot transitions to a sink state $\sink$. The payoff for reaching the sink state $\sink$ is defined as $R(\sink) =\max( r_1,r_2)$.
        \item Transition probability function is 
        \[
            P(h '\mid h , a_\robot)= \sum_{a_\adv\in\act_\adv} \indicator_{\{h '\}}(\Delta(h , (a_\robot,a_\adv)))\sigma_L(h , a_\adv)
        \]
        
    \end{itemize}
\end{itemize}
\end{definition}

The optimal opportunistic strategy $\pi$ for the robot is the one that solves 
\[
\max_{\pi}\mathbb{E}\left[ 
\sum_{t=1}^T R(h _t)
\right]
\]
where $T$ is the first time when a sink state is reached. The rationale behind defining sink states is to provide the robot with a mechanism to decide whether it wants to explore the state space to find an opportunity or settle for a sub-optimal payoff by satisfying a sub-specification. We define the set of states $\{h \mid {\cal{W}}(h)\in \{ (L,W, L) ,(W,W, W)\}\} \cup \{\sink_1, \sink\}$ as absorbing in the hypergame \ac{mdp}.

\begin{lemma}
By following the optimal strategy in the \ac{mdp} $\mathcal{H}^\sigma$, 
the total payoff  is finite.
\end{lemma}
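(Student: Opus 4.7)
The plan is to exploit the fact that in the MDP $\mathcal{H}^\sigma$ the payoff function $R$ is non-zero only on absorbing states, and that each such payoff is a fixed finite constant. Since $T$ is by definition the first time a sink (absorbing) state is reached, the sum $\sum_{t=1}^T R(h_t)$ will collapse to a single bounded term, which would in fact finish the argument for any strategy, not just the optimal one.

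First I would enumerate the payoffs listed in Definition~\ref{defn:mdp}: the non-absorbing state classes $(L,L,L)$, $(W,L,L)$, and $(W,W,L)$ carry no reward, whereas the absorbing set $\{h : \mathcal{W}(h) \in \{(L,W,L),(W,W,W)\}\} \cup \{\sink_1, \sink\}$ carries the rewards $r_2$, $r$, $r_1$, and $\max(r_1, r_2)$ respectively. I would then note that along any run $h_1 h_2 \ldots$ induced by the optimal robot strategy $\pi$ and the adversary strategy $\sigma$, the definition of $T$ forces $R(h_t) = 0$ for every $t < T$, so
\[
\sum_{t=1}^T R(h_t) \;=\; R(h_T)\cdot \indicator_{\{T<\infty\}}.
\]
Using the constraint $r \ge r_1 + r_2$ stated just before Definition~\ref{defn:mdp} together with the positivity of $r_1, r_2$, the largest absorbing-state reward is $r$, and so the pointwise bound $\sum_{t=1}^T R(h_t) \le r$ holds almost surely. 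Taking expectation preserves the bound and yields $\mathbb{E}\left[\sum_{t=1}^T R(h_t)\right] \le r < \infty$, which is exactly the claim.

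I do not anticipate a real technical obstacle; the lemma is essentially a sanity check that the MDP is well-posed for value iteration / dynamic programming. The only point worth being careful about is the reading of ``sink state'' in the definition of $T$: it must include the absorbing labels $(L,W,L)$ and $(W,W,W)$ in addition to $\sink_1$ and $\sink$, because otherwise a trajectory trapped in a rewarding absorbing state would accumulate an unbounded sum and the lemma would fail. Under the intended reading this ambiguity disappears and the bounded-reward computation above is sufficient.
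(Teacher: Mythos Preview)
Your proposal is correct, and in fact cleaner than the paper's argument, but it proceeds along a different line.

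The paper's own proof is operational: it performs a case analysis on the win-label of the current state and traces the possible label transitions. Starting from $(L,L,L)$ the process either stays there forever (collecting zero reward) or eventually enters one of the labels $(W,W,L)$, $(W,L,L)$, $(L,W,L)$, $(W,W,W)$; from the first two of these the enabled-action restrictions prevent falling back to $(L,L,L)$, and the process either takes $\stopgame$ or reaches one of the absorbing labels $(L,W,L)$, $(W,W,W)$, each carrying a finite payoff. The boundedness conclusion is then asserted from this reachability picture.

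Your route bypasses the case tracing entirely: you observe that $R$ vanishes on every non-absorbing state and is bounded by $r$ on the absorbing set, and that $T$ is the hitting time of that set, so the sum $\sum_{t=1}^T R(h_t)$ reduces pointwise to at most the single term $R(h_T)\le r$. This is shorter, and it actually yields the stronger conclusion that \emph{every} strategy, not just the optimal one, produces a finite total payoff. You also explicitly flag the ambiguity in the phrase ``sink state'' in the definition of $T$---that it must be read as the full absorbing set $\{h:\mathcal{W}(h)\in\{(L,W,L),(W,W,W)\}\}\cup\{\sink_1,\sink\}$ rather than just $\{\sink_1,\sink\}$---a point the paper's proof leaves implicit but which is essential for either argument to go through.
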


\begin{proof}
We prove this case by case: When the initial state $h_0$ is labeled $(L,L,L)$, then by following the optimal strategy, the robot can reach a state with labels in $\{(W, W, L), (W,L,L), (L,W,L), (W,W,W)\}$ or stay in $(L,L,L)$. In the case of staying in $(L,L,L)$, the robot receives a payoff of zero. The total payoff is bounded. 
When it reaches a state with labels in $\{(W, W, L), (W,L,L), (L,W,L), (W,W,W)\}$, the robot is ensured to at least win one of the games as its feasible actions are restricted to ensure staying in the winning regions it is currently in. Then, in cases when the label is in $\{(W, W, L), (W,L,L)\}$, it will either settle down to win one of the games by taking the action $\stopgame$ or reach a state with label in $\{(L,W,L), (W,W,W)\}$, which are absorbing and have finite payoffs. Thus, the total payoff is bounded and the planning to maximize the  total payoff without discounting is well-defined.
\end{proof}

\section{Case Study}

\begin{figure}
    \centering
    \includegraphics[scale=1]{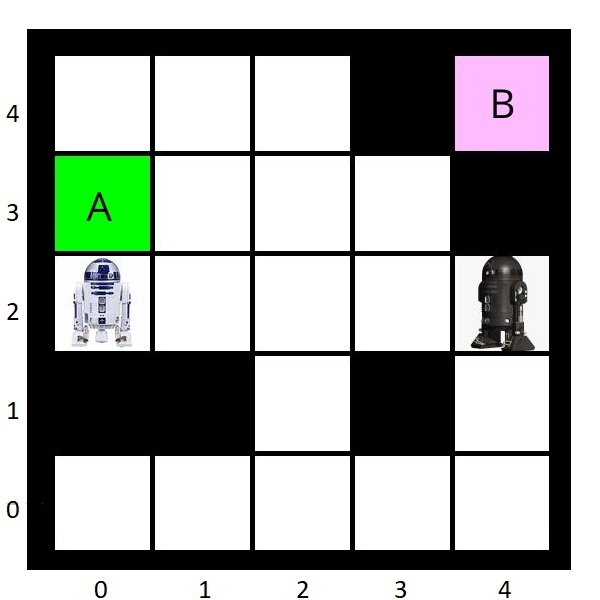}
    \caption{A $5 \times 5$ gridworld with R2D2 at the cell $(0, 2)$ and IDroid at $(4, 2)$. The objective of R2D2 to visit the cell $(0, 3)$ labeled A is known to both players. The objective of R2D2 to visit the cell $(4, 4)$ labeled B is not known to IDroid. The black cells represent the obstacles.}
    \label{fig:gridworld}
\end{figure}

We illustrate our approach using a gridworld example as shown in \fig{fig:gridworld}, with two robots - R2D2 and Imperial Droid (IDroid). R2D2 is the controllable robot, whereas IDroid is adversarial. The objective of R2D2 is to visit two regions, $A$ (green) and $B$ (blue), while avoiding obstacles $O$ (black), whereas the objective of IDroid is to prevent R2D2 from completing its task. We consider the case where IDroid \textit{misperceives} that R2D2's task is to visit only region $A$. Therefore, using the \ac{ltl} notation, the specification of R2D2 is $\varphi = (\neg O ~\mathcal{U}~A) \land (\neg O~\mathcal{U}~B)$, whereas the misperception of IDroid about R2D2's task is $\psi = \neg O~\mathcal{U}~A$. This defines the information asymmetry in the interaction. Furthermore, we restrict the actions of R2D2 and IDroid for illustration purposes as follows, (we will use the symbol $R$ to denote R2D2 and $E$ to denote IDroid to maintain consistency with the notation of the paper)
\begin{align*}
    \act_\robot &= \{\text{N, S, E, W, NE, NW, SE, SW}\} \\
    \act_\adv &= \{\text{N, S, E, W, STAY}\}
\end{align*}

Given 20 obstacle-free cells of gridworld in \fig{fig:gridworld} and the action-set, we construct the transition system (\defn{defn:mdp}) with $20 \times 20 \times 2 = 800$ states. The automaton equivalent to $\neg O ~\mathcal{U}~X$ for $X = A, B$ is shown in the \fig{fig:automaton}.  We prune unsafe actions that drive the robot to the obstacle and thus exclude the transitions labeled $O$ and the state $2$ in computing the transition system. Therefore, the hypergame transition system has $800 \times 2 \times 2 = 3200$ states, where we keep track of both sub-specification using two automata.  Consequently, each sub-game, $\game(\varphi_1)$ and $\game(\varphi_2)$, has $800 \times 2 \times 1 = 1600$ final states and the game $\game(\varphi)$ has $800$ final states. The attractor computation for each of the three games generates the winning regions with sizes: $|\win_\robot(\varphi_1)| = 2491$, $|\win_\robot(\varphi_2)| = 2527$, and $|\win_\robot(\varphi)| = 1831$.

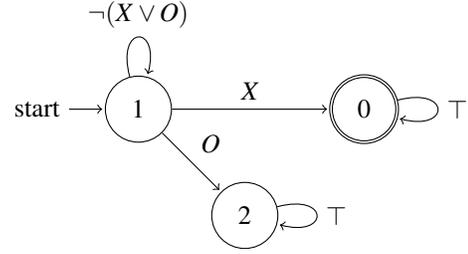
\begin{figure}
    \centering
\begin{tikzpicture}[shorten >=1pt,node distance=2cm,on grid,auto,scale =1, transform shape] 
   \node[state, initial] (1)   {$1$}; 
      \node[state] (2) [below right =of 1] {$2$};
   \node[state, accepting, node distance = 3cm] (0) [ right =of 1] {$0$};
\path[->] 
    (1) edge node{$X$} (0)
 (1) edge [loop above] node{$\neg (X \lor O)$} (1)
     (1) edge node{$O$} (2)
      (2) edge [loop right] node{$\top$} (2)
            (0) edge [loop right] node{$\top$} (0);
\end{tikzpicture}
\caption{The automaton for $\neg O~\mathcal{U}~X$, where $X\in \{A,B\}$.}
    \label{fig:automaton}
\end{figure}

Given the three winning regions, we first validate that the state-space is indeed paritioned in five regions as discussed in \sect{sec:partition}. For every state in the hypergame transition system, we assign a win-label to it by determining the winning regions in which the state appears. The result is tabulated in \ref{table:partitions}. We observe that the state-space is partitioned into exactly five regions. 

\begin{table}[]
\centering
\begin{tabular}{|c|c|}
\hline
\textbf{Partition} & \textbf{Number of States} \\ \hline
(W, W, W)          & 1831                      \\ \hline
(W, W, L)          & 181                       \\ \hline
(W, L, L)          & 479                       \\ \hline
(L, W, L)          & 515                       \\ \hline
(L, L, L)          & 194                       \\ \hline
(W, L, W)          & 0                         \\ \hline
(L, W, W)          & 0                         \\ \hline
(L, L, W)          & 0                         \\ \hline
\end{tabular}
\caption{Partition of game state-space due to information asymmetry.}
\label{table:partitions}
\end{table}

{\remark It is not necessary that the states will always be partitioned into \textit{exactly} five regions. For instance, consider an adversary with only ``STAY" action, then the state-space will be partitioned into exactly 2 regions.}

Using the five partitions, we construct the hypergame \ac{mdp} as defined in \defn{defn:mdp}. As expected, the \ac{mdp} has $1600$ states, where R2D2 makes a decision. We define the stochastic strategies for IDroid as follows: for every state with win-label of $(L, \cdot, \cdot)$, we assume $\sigma_W$ to be a uniform distribution over all safe actions; i.e. the actions that lead to another state with a win-label of type $(L, \cdot, \cdot)$, with probability \textit{one}. We define $\sigma_L$ by assigning a random distribution over all feasible actions from a state within partitions $(W, \cdot, \cdot)$. Given the hypergame \ac{mdp} states and the adversary strategy $\sigma$, the transition probabilities are determined based on win-label of the state and the corresponding expression for $P(h' \mid h, a)$ provided in \defn{defn:mdp}. We compute the value function and opportunistic strategy using the standard value iteration algorithm \cite{Puterman1994}. 



Next, we illustrate the decision process in the hypergame \ac{mdp}.
Let the initial configuration be such that R2D2 is at the cell $(0, 2)$, and IDroid is at $(4, 2)$ as shown in \fig{fig:gridworld}. Therefore, the initial state in the hypergame \ac{mdp} is $h_0 = (((0, 2), (4, 2), 0), 1, 1)$. We define the payoff for reaching goal $A$ as $r_1 = 200$ and that for reaching goal $B$ as $r_2 = 100$. With this initial configuration we simulate the interaction between R2D2 and IDroid, where R2D2 uses the opportunistic strategy $\pi$ and IDroid uses the strategy $\sigma$. We run the simulation for 100 times. We will use one of the runs obtained from simulation, as given below
\begin{enumerate}
    \item State: $(((0, 2), (4, 2), 0), 1, 1)$, win-label: $(W, L, L)$
    \item State: $(((0, 3), (3, 2), 0), 0, 1)$, win-label: $(W, L, L)$
    \item State: $(((1, 2), (2, 2), 0), 0, 1)$, win-label: $(W, W, W)$
\end{enumerate}

\begin{table}[]
\centering
\vspace{1em}
\begin{tabular}{|c|c|c|c|c|}
\hline
\textbf{Act}                & \textbf{Next State}                  & \textbf{Partition} & \textbf{Prob} & \textbf{Value}  \\ \hline
\multirow{3}{*}{\textbf{N}} & \textbf{(((0, 3), (4, 2), 0), 0, 1)} & \textbf{(W, L, L)} & \textbf{0.03} & \textbf{288.99} \\ \cline{2-5} 
                            & \textbf{(((0, 3), (3, 2), 0), 0, 1)} & \textbf{(W, L, L)} & \textbf{0.36} & \textbf{290.20} \\ \cline{2-5} 
                            & \textbf{(((0, 3), (4, 1), 0), 0, 1)} & \textbf{(W, W, W)} & \textbf{0.61} & \textbf{288.99} \\ \hline
\multirow{3}{*}{E}          & (((1, 2), (4, 1), 0), 0, 1)          & (W, W, L)          & 0.25          & 0               \\ \cline{2-5} 
                            & (((1, 2), (3, 2), 0), 1, 1)          & (W, L, L)          & 0.73          & 297.41          \\ \cline{2-5} 
                            & (((1, 2), (4, 2), 0), 1, 1)          & (W, W, L)          & 0.02          & 0               \\ \hline
\multirow{3}{*}{NE}         & (((1, 3), (3, 2), 0), 1, 1)          & (W, L, L)          & 0.38          & 259.42          \\ \cline{2-5} 
                            & (((1, 3), (4, 2), 0), 1, 1)          & (W, W, L)          & 0.18          & 285.03          \\ \cline{2-5} 
                            & (((1, 3), (4, 1), 0), 1, 1)          & (W, W, L)          & 0.44          & 299.25          \\ \hline
\end{tabular}
\caption{A decision table for state $(((0,2),(4,2),0),1,1)$ with value $285.03$ and strategy to choose action ``N".}
\label{table:transition}
\end{table}

To get some insight into the decision process, observe the Table~\ref{table:transition}, which shows the enabled actions, possible next states and their respective partitions, the probability of reaching those states and the value of those states. Based on the value iteration, the value of initial state is $285.03$, while the optimal strategy is to select action ``N", which has a high likelihood to reach a $(W, W, W)$ state. Note that by choosing action ``E", if the robot reaches a state with value $0$, then it chooses to settle for sub-optimal payoff of $r_1 = 200$ by satisfying only $\varphi_1$. Hence, the action ``N" is preferred over ``E". A similar argument can be given for the action ``NE". 

We now point out the key advantage of the opportunistic synthesis over reactive synthesis. Observe that the initial state is losing in the game $\game(\varphi)$ for R2D2. Therefore, if R2D2 uses reactive synthesis approach, it will give up instantaneously and get no payoff. On the contrary, with opportunistic synthesis, R2D2 could leverage the misperception of IDroid to start from a losing state in $\game(\varphi)$ and reach a winning state in the game. 

We also highlight that the construction of hypergame \ac{mdp} is such that R2D2 behaves rationally and tries to maximize the payoff before settling down with a sub-optimal payoff. Given the initial state in partition $(W, L, L)$, it could have chosen the $\stopgame$ action and switched to the winning strategy in $\game(\varphi_1)$ to get a payoff of $r_1 = 200$. Instead, it preferred to explore for an opportunity to get the payoff of $r = r_1 + r_2 = 300$. 


We conclude this section by counting the number of states with opportunities. This is done by counting the number of \ac{mdp} states with non-zero value. Recall that we label the sink states in the \ac{mdp} as absorbing with a fixed payoff. Therefore, they always have fixed value of \textit{one}. We find that there are a total of $1245$ absorbing states and $312$ states with opportunities. This implies that there are $43$ states with no opportunities. In other words, not all losing states in the reactive game $\game(\varphi)$  have opportunities.

\section{Discussion and Conclusion}

In this paper, we have introduced a novel strategy synthesis approach---\textit{opportunistic synthesis}---to solve reactive games under information asymmetry. By modeling the misperception in the interaction between the robot and its adversarial environment as a hypergame and the corresponding decision problem as a hypergame \ac{mdp}, we identify opportunities by maximizing the expected value to reach a winning state in the reactive game from a losing state. This primarily results in a larger number of states from which the robot can satisfy its specification, $\varphi$. 
As a bonus, the approach also allows us to compute the states from which the robot has an opportunity to satisfy a partial specification, $\varphi_1$ or $\varphi_2$.

From a computational point of view, the opportunistic synthesis has the same (time and space) complexity as that of the reactive synthesis. The number of computations in our approach is related to that of  reactive synthesis by only a constant scaling factor; because we require three attractor computations and a value iteration instead of a single attractor computation. Note that the definition of hypergame transition system in \defn{defn:tsys} dispenses with constructing different game structures for computing winning regions of the sub-games. 

We have presented the preliminary results of our investigation, in what we believe to be the first work, in the use of 
hypergame model to study a reactive game with information asymmetry. To restrict the hypergame model to second-level, we have introduced two assumptions that, \textit{the adversary has partial information regarding robot's specification} and \textit{the robot knows adversary's losing strategy $\sigma_L$ and winning strategy $\sigma_W$}. The relaxation of these assumptions opens up two directions for future research. The first one investigates  opportunistic synthesis when adversary perceives the robot's objective as $\psi \neq \varphi$, i.e. the language of $\psi$ may be a subset, superset or disjoint with the language of $\varphi$. The second direction investigates the use of policy inference techniques for the robot to learn the adversary's strategies $\sigma_L$ and $\sigma_W$. 

\bibliographystyle{ieeetran}
\bibliography{CDC2019Submission}
\end{document}